
\documentclass[12pt]{amsart}
\usepackage{eurosym}
\usepackage{amsfonts}
\usepackage{geometry}
\usepackage{graphicx}
\usepackage{xcolor}
\usepackage{float}
\usepackage{amsthm}
\usepackage{amsmath}
\usepackage{tikz}
\usepackage[hidelinks]{hyperref}

\graphicspath{ {./images/} }
\usetikzlibrary{shadings}
\geometry{a4paper} 
\theoremstyle{plain}

\newcommand{\be}{\begin{equation}}
\newcommand{\ee}{\end{equation}}

\newtheorem{proposition}{Proposition}
\newtheorem{Thm}{Theorem}

\newtheorem{corollary}{Corollary}
\newtheorem{remark}{Remark}
\newtheorem{example}{Example}

\begin{document}
\title{Innovation and Imitation}
\author[J. Benhabib]{Jess Benhabib}
\address{Department of Economics, New York University, 19 West 4th Street,
New York, NY 10003, USA}
\author[E. Brunet]{Eric Brunet}
\address{Laboratoire de Physique de l'\'Ecole normale
sup\'erieure, ENS, Universit\'e PSL, CNRS, Sorbonne Universit\'e,
Universit\'e de Paris, F-75005 Paris, France}
\author[M. Hager]{Mildred Hager}
\address{Department of Economics, New York University, 19 West 4th Street,
New York, NY 10003, USA}
\date{\today{}}

\begin{abstract}
We study several models of growth driven by innovation and imitation by a
continuum of firms, focusing on the interaction between the two. We first
investigate a model on a technology ladder where innovation and imitation
combine to generate a balanced growth path (BGP) with compact support, and
with productivity distributions for firms that are truncated power-laws. We
start with a simple model where firms can adopt technologies of other firms
with higher productivities according to exogenous probabilities. We then
study the case where the adoption probabilities depend on the probability
distribution of productivities at each time. We finally consider models with
a finite number of firms, which by construction have firm productivity
distributions with bounded support. Stochastic imitation and innovation can
make the distance of the productivity frontier to the lowest productivity
level fluctuate, and this distance can occasionally become large.
Alternatively, if we fix the length of the support of the productivity
distribution because firms too far from the frontier cannot survive, the
number of firms can fluctuate randomly.
\end{abstract}

\maketitle

\section{Introduction}

Economic growth is partly the result of costly research activities that
firms undertake in order to innovate, and to increase their productivity.
Growth is also driven by technology diffusion and imitation that takes place
between firms. The role of technology diffusion across countries is
evidenced by the extraordinary sustained growth rates in China and other
East Asian countries during the recent decades. In this paper we investigate
several models of growth driven both by innovation and imitation, focusing
on the interaction between the two. New ideas and innovations push out the
technology frontier. Imitation enables firms to catch up with those higher
up on the technology ladder. We study the dynamics of the productivity
distribution of firms, where productivity is increasing with the rates of
innovation and imitation, and we provide a characterization of its
stationary distribution in the long run. In our study we do not take into
account the effect of the size of the firm on its growth\footnote{One
of the first precursor papers that explores the dynamics of firm-size
distributions is Bonini and Simon (1958). They introduce random growth
proportionate to firm size, coupled with entry of new firms of the
smallest-size at a constant rate. In the limit the productivity distribution
converges to a Pareto distribution. Another classical investigation of the
firm productivity and size distribution is Hopenhayn (1992).}.

As demonstrated by Lucas (2009), in models of technology diffusion based on
imitation alone, growth can be sustained only if the initial distribution of
productivities has unbounded support for high productivity levels. In Lucas
and Moll (2014), and Perla and Tonetti (2014), technology diffusion is
search theoretic, where firms seek higher productivity firms to imitate from
and to adopt superior technology. In these models an unbounded productivity
distribution is necessary to sustain growth through imitation in the long
run. With an initial productivity distribution that has bounded support
imitation ultimately stops, as productivities of the imitating firms
collapse toward the productivity frontier. Therefore, unboundedness is more
than a convenient and inconsequential simplification.

In contrast, in models of endogenous growth, innovation is the primary
driving force of growth. Firms engage in research to generate individual
innovations. These innovations later may become a common stock of ideas that
are available to the whole economy, generating spillovers (Romer (1990)).
Alternatively, innovations are Schumpeterian, in the sense that firms can
leapfrog beyond the productivity frontier. They overtake incumbent firms and
drive them out of business, increasing overall productivity over time
(Aghion and Howitt (1992)).

Models involving both random innovations via a geometric Brownian motion, as
well as imitation via random meetings between firms generating technology
diffusion, have been proposed by Luttmer (2012) and Staley (2011)\footnote{Other
recent models combining innovation and imitation include Benhabib,
Perla and Tonnetti (2014), K\"onig, Lorenz and Zillibotti (2016), Akcigit
and Kerr (2016) and Buera and Lucas (2018).}. Their approach is related to
the KPP\ equation, originally studied in the mathematics literature by
Kolmogorov, Petrovski and Piskunov (1937), and later by McKean (1975) and
Bramson (1984) among others. These models can admit a unique balanced growth
path (BGP) that is a global attractor, and whose shape depends on imitation
and innovation propensities, but not on initial conditions\footnote{To
be more precise however, for the KPP equation the asymptotic BGP velocity
and shape does depend on initial conditions if the initial distribution is
thick tailed. See Bramson (1984).}. Innovations driven by Brownian motion
however assures that the productivity distribution immediately becomes
unbounded, and the resulting BGP does not have compact support.

Having a compact support is particularly relevant for empirical purposes, as
the support of the productivity distribution in individual industries is
found to be quite localized (Syverson (2004), Hsieh and Klenow (2009)).
Firms with significantly low productivity relative to the frontier firms are
unlikely to survive the competition, and to preserve their market shares for
long. The forces of Schumpeterian ``creative destruction'' may endogenously
replace the inefficient firms at the bottom of the productivity
distribution. However other firms, below but not too far from the frontier
may survive, giving a distribution of productivities that allows both for
innovation and imitation to persist over time.

In section~\ref{fixedsupport}, we first investigate a model on a ladder.
Innovation and imitation combine to generate a balanced growth path (BGP)
with compact support. In contrast to models with imitation alone (see Lucas
(2009)), the distribution does not collapse to the frontier either. The
distribution of productivities is centered around some productivity moving
up at a constant growth rate, and keeps its shape relative to this
productivity over time (a traveling wave which is compactly supported). In
section~\ref{sub:exo} we first propose a very simple model of firms on a
quality ladder that can both innovate and imitate, and where with some
positive probability imitators can leapfrog to the productivity frontier. We
characterize the stationary distribution of productivities as a truncated
power law. This model has the advantage of being very simple, but leaves
imitation rates mostly exogenous. In section \ref{density} we extend this
model to introduce density dependent imitation rates. In section~\ref{mlength}
we endogenize the length of the support of balance growth path as
arising from optimal choices of firms.

Another approach to generating productivity distributions that have finite
support is to limit the number of firms to be finite. By construction,
distributions over a finite number of firms have bounded support; however,
stochastic imitation and innovation can make the distance of the
productivity frontier to the lowest productivity level fluctuate, and this
distance can occasionally become quite large. In section~\ref{sec:BRW}, we
study such models with innovation, imitation and a finite number of firms,
the so-called $N$-BRW and $L$-BRW models. These models introduce alternative
approaches to modeling entry, exit, and competition, but also feature
balanced growth paths with compact support. We characterize some features of
their productivity distributions and relate them to results obtained in
earlier sections. Section~\ref{sec:conclusion} concludes.

\section{Innovation and imitation with fixed compact support}

\label{fixedsupport}

In this section, we consider a discrete time model of innovation and
imitation. Innovation can be gradual, by moving up a quality ladder as in
Klette and Kortum (2004) and K\"onig, Lorenz and Zillibotti (2016). But it
can also be a breakthrough, where agents or firms move up from the bottom
and overtake the top, that is they ``leapfrog''. On top of that, agents
imitate other agents. So we start in section~\ref{sub:exo} with a model of
exogenous imitation rates. In section~\ref{density}, we endogenize the
imitation choice and obtain a stationnary productivity distribution that
looks like a truncated power law on a finite support. Then in section~\ref{mlength}
we show that the assumption of a fixed support length of the BGP
is actually the result of an optimal choice problem that trades off the
costs of imitation and its benefits, as in Perla and Tonetti (2014).

In this section, we only consider the case where the number of firms is
sufficiently large to neglect finite-size effects and stochastic behavior.
In fact, to make ``microscopic'' and probabilistic interpretations at the
firm level, and not only speak about densities, we have to assume that a law
of large number holds.

\subsection{Exogenous innovation and imitation}

\label{sub:exo}

At each time $t \in \mathbb{N}$, a firm has a productivity level $i \in 
\mathbb{N}$ on a discrete ladder\footnote{There is no assumption that these productivity levels placed on a ladder are
equally spaced: the rungs of the ladder need not be equidistant from each
other. They simply represent the productivities that can be imitated and
adopted, and we could easily use any ladder for $i$ that maps to $\mathbb{N}$.}.

The density of firms at time $t$ on level $i$ is given by a non-negative
number $f_{t}^{i} \in \mathbb{R}_{+}^{0}$ with $\sum_i f_{t}^{i}=1$ for
every $t$. At each time step, when going from $t$ to $t+1$, firms improve
their productivity and the density climbs up the ladder along some rules
which we explicit now. We assume that, at each time step and at each level,
a fraction $a\in(0,1)$ of firms moves up the productivity ladder by one
level (``innovation''), and a fraction $1-a$ remains stagnant, with the same
productivity. This amounts to assuming a law of large numbers for random
innovation with probability of success $a$. Then, all of the firms that
remained stagnant at the lowest productivity level $i=1$ either leapfrog or
imitate as described below, leaving the lowest level empty. (This
corresponds to a fraction $(1-a)f_t^1$ of all firms.)

We call $m \in \mathbb{N}$ the highest level at time $t$. Given our process,
at time $t+1$, the productivity level $i=m+1$ gets populated, and the lowest
productivity level $i=1$ is emptied as described below. Then, at each
period, we rename the levels: what was the level $i$ at time $t$ becomes the
level $i-1$ at time $t+1$. In this way, the populated levels at the
beginning of each time step are always numbered $\{1,2,\ldots,m\}$. For the
moment, we take the length of support $m $ as given, and postpone a
discussion of endogenously chosen $m$ to section~\ref{mlength}.

In this section~\ref{sub:exo}, imitation for non-innovating firms at the
lowest level happens as follows: at level $i\in\{1,2,\ldots,m\}$, the
fraction of imitators entering level $i$ at time $t+1$ is $(1-a)f_t^1 q_i$,
where the $q_i \in [0,1]$ satisfy 
\begin{equation}
q_i\ge0,\qquad \sum_{i=1}^m q_i=1.
\end{equation}
The $q_i$ for $i\in\{1,\ldots,m-1\}$ represent imitation (exogenous in this
section~\ref{sub:exo}), while $q_m$ represents leapfrogging --- \textit{i.e.}
firms at the lowest level of the productivity distribution at each time $t$
that overtake the current productivity frontier (indeed, the productivity
level $m$ at time $t+1$ corresponds to the productivity level $m+1$ at time $t$
which was not yet populated). If $q_m=0$, leapfrogging is excluded, while
setting $q_m=1$ excludes any imitation.

Note that we assume that jumps to higher productivity levels, whether from
leapfrogging or imitation, do not depend on target productivity densities,
so for the time being we abstract away from any search-theoretic
microfoundation.

The transition dynamics can be written in a single equation 
\begin{equation}  \label{dyn1}
f_{t+1}^i = \underbrace{(1-a)f_t^{i+1}}_\text{fall back} + \underbrace{a
f_t^i }_\text{innovation} + \underbrace{(1-a) f_t^1 q_i}_\text{imitation or
leapfrogging}
\end{equation}
or, more conveniently, with a matrix $A \in M_{m}([0,1])$ representing the
productivity dynamics: 
\begin{align}
f_{t+1} & =A f_{t} \ ,  \notag
\end{align}
\begin{align}
\left[ 
\begin{array}{c}
f_{t+1}^{1} \\ 
f_{t+1}^{2} \\ 
. \\ 
. \\ 
. \\ 
f_{t+1}^{m}\strut
\end{array}
\right] & =\left[ 
\begin{array}{cccccc}
a + q_{1} (1-a) & 1-a & 0 & . & . & 0 \\ 
q_{2}\left( 1-a\right) & a & 1-a & 0 & . & . \\ 
. & . & . & . & . & . \\ 
. & . & . & a & 1-a & 0 \\ 
q_{m-1}\left( 1-a\right) & . & . & 0 & a & 1-a \\ 
q_{m}\left( 1-a\right) & 0 & . & . & 0 & a
\end{array}
\right] \left[ 
\begin{array}{c}
f_{t}^{1} \\ 
f_{t}^{2} \\ 
. \\ 
. \\ 
. \\ 
f_{t}^{m}
\end{array}
\right] \ .  \label{dyn}
\end{align}

By construction $A$ has column sums adding to $1$ as the number of firms
remains constant (in effect, we have a particular birth and death model). $A$
admits $1$ as an eigenvalue. The associated eigenvector is the stationary
distribution for productivity densities, moving up as a traveling wave. The
stationary distribution can be characterized as follows.

\begin{proposition}
\label{exogenous} Let $Q_{s}=\left(
q_{m}+q_{m-1}+\cdots+q_{s}\right)=\sum_{j=s}^{m}q_{j}$, with $Q_m=q_m$ and $Q_1=1$.

The stationary distribution $\left( f_{\infty }^{1},f_{\infty }^{2},\ldots
,f_{\infty }^{m}\right) $, for any $a\in (0,1)$, is given by: 
\begin{equation}  \label{stat}
f_{\infty }^{s}=Q_{s}f_{\infty }^{1}\ ,s=1,...,m\ ,
\end{equation}
and 
\begin{equation*}
f_{\infty }^{1}\left( \sum\limits_{s=1}^{m}Q_{s}\right) =1\ or\ \ f_{\infty
}^{1}=\left( \sum\limits_{s=1}^{m}Q_{s}\right) ^{-1}\ .
\end{equation*}
\end{proposition}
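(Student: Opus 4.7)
The plan is to read off the stationarity condition $A f_\infty = f_\infty$ row by row and then telescope. Writing out \eqref{dyn1} at stationarity, we get, for $1\le i\le m-1$,
\begin{equation*}
f_\infty^i = (1-a) f_\infty^{i+1} + a f_\infty^i + (1-a) f_\infty^1 q_i,
\end{equation*}
and, using the convention $f_\infty^{m+1}=0$ (no level above $m$), for $i=m$,
\begin{equation*}
f_\infty^m = a f_\infty^m + (1-a) f_\infty^1 q_m.
\end{equation*}
Since $a\in(0,1)$ the factor $(1-a)$ is nonzero, so after cancelling $a f_\infty^i$ on both sides and dividing by $1-a$ each equation collapses to the clean first-order recurrence
\begin{equation*}
f_\infty^i - f_\infty^{i+1} = q_i f_\infty^1, \qquad i=1,\ldots,m,
\end{equation*}
again with $f_\infty^{m+1}=0$.

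Next I would solve this recurrence by downward induction starting from the top rung. The $i=m$ equation immediately yields $f_\infty^m = q_m f_\infty^1 = Q_m f_\infty^1$. Assuming inductively that $f_\infty^{s+1} = Q_{s+1} f_\infty^1$, the recurrence at level $s$ gives
\begin{equation*}
f_\infty^s = f_\infty^{s+1} + q_s f_\infty^1 = (Q_{s+1} + q_s)\, f_\infty^1 = Q_s f_\infty^1,
\end{equation*}
which is exactly \eqref{stat}. As a consistency check, at $s=1$ the formula says $f_\infty^1 = Q_1 f_\infty^1$, which is compatible with the assumption $\sum_{j=1}^m q_j = 1$, i.e.\ $Q_1=1$.

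Finally, the normalization $\sum_{s=1}^m f_\infty^s = 1$ (which is preserved by $A$ since its columns sum to one, so it is automatically consistent with the eigenvector equation) determines $f_\infty^1$. Substituting \eqref{stat} gives $f_\infty^1 \sum_{s=1}^m Q_s = 1$, hence $f_\infty^1 = \bigl(\sum_{s=1}^m Q_s\bigr)^{-1}$, as claimed. I do not anticipate a real obstacle: the only subtlety is handling the top boundary correctly (being careful that the ``fall back'' term vanishes at $i=m$), which is what makes the recurrence terminate with $Q_m=q_m$ rather than something else; everything else is a two-line telescoping argument plus normalization.
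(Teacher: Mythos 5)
Your proof is correct and takes essentially the same route as the paper: read off the stationarity equation row by row, obtain $f_\infty^m = q_m f_\infty^1$ from the top, induct downward to get $f_\infty^s = Q_s f_\infty^1$, and close with normalization. The only cosmetic difference is that you cancel the $a f_\infty^i$ terms and divide by $1-a$ once up front to expose the clean telescoping recurrence $f_\infty^i - f_\infty^{i+1} = q_i f_\infty^1$, whereas the paper performs that algebra inside each inductive step; the underlying argument is identical.
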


\begin{proof}
The stationary solution fulfills $f_{\infty} = A f_{\infty}$. To simplify
notation let $f_{\infty }^{j}\equiv x_{j},$ $j=1,\ldots,m$.

We start with the last line of equation (\ref{dyn}): 
\begin{equation*}
q_{m}\left( 1-a\right) x_{1}+ax_{m}=x_{m}\qquad \Rightarrow\qquad
x_{m}=q_{m}x_{1}\ .
\end{equation*}

We prove by induction. The line next to last yields 
\begin{align*}
& q_{m-1}\left( 1-a\right) x_{1}+ax_{m-1}+(1-a)x_{m}=x_{m-1} \\
&\qquad \Rightarrow \left( q_{m-1}+q_{m}\right) \left( 1-a\right)
x_{1}=\left( 1-a\right) x_{m-1} \\
&\qquad \Rightarrow x_{m-1}=\left( q_{m-1}+q_{m}\right) x_{1} \ .
\end{align*}
Assume that $x_{m-(s-1)} = (q_{m-(s-1)} +\cdots+q_{m})x_{1}$ . Then we have 
\begin{align*}
& q_{m-s}\left( 1-a\right) x_{1}+ax_{m-s}+\left( 1-a\right)
x_{m-(s-1)}=x_{m-s} \\
& \qquad\Rightarrow x_{m-s}=\left( q_{m-s}+q_{m- (s-1)
}+\cdots+q_{m-1}+q_{m}\right) x_{1} \ .
\end{align*}
This completes the induction proof. Relabeling $m-s$ as $s$, we obtain (\ref{stat}).

We have left one free variable, $x_{1}$, which will be determined by the
normalization of $f$; writing $\sum_{i=1}^{m} x_i =1=x_1 \left(
\sum_{s=0}^{m-1} Q_{m-s}\right) = x_{1}\left( \sum_{s=1}^{m}Q_{s}\right) $,
we get the results.
\end{proof}

The stationary distribution is independent of the probability of innovation $a\in (0,1),$ and only depends on the intensity $q_{i}$ of imitation rates
across productivities. But the speed of convergence to the stationary
distribution depends on $a$, as it affects the eigenvalues of $A$. In
particular the second highest eigenvalue of $A$, which is less than $1$ in
modulus\footnote{Indeed, the matrix $A$ has non-negative entries, is aperiodic since the
diagonal elements are positive, and irreducible as any productivity level
can be reached from any other one. Therefore, the Perron-Frobenius Theorem
implies that the largest eigenvalue --- here, $1$ --- is simple, and that
all other eigenvalues are strictly smaller in modulus.}, can be taken as an
indicator of the convergence rate. The lower this eigenvalue, the faster the
convergence rate. For $m=2$, it can be explicitly computed to be equal to $2a-1 +q_{1}(1-a)=a-(1-a)q_{2}$. This is increasing in $q_{1}$ (more
imitation implies slower convergence), decreasing in $q_{2}$, the
leapfrogging rate (more leapfrogging implies faster convergence), and
increasing in~$a$ (more innovation implies slower convergence).

Firms at any productivity level except the lowest one tend to drop down the
ladder over time. At the bottom of the ladder, non-innovating firms jump to
higher levels through innovation and imitation. Overall, the stationary
density of productivity levels is non-increasing over productivity levels.

We now discuss two special cases:

\subsubsection*{No imitation, only leapfrogging\label{nolf}}

If there is no imitation and only leapfrogging, that is if $q_{m}=1$ and
therefore $q_{i}=0$ for $i\in \{1,\ldots ,m-1\}$, it follows that $f_{\infty
}^{i}=m^{-1}$ for $i=1,\ldots,m$, so the productivity distribution becomes
uniform. Firms that jump to the frontier slide down the productivity
distribution, until they reach the lowest density from which they again jump
to the frontier.

\subsubsection*{No leapfrogging, only imitation}

In this case $q_{m}=0$ and the matrix in (\ref{dyn}) is decomposable. In
particular, the highest productivity evolves with $f_{t+1}^{m} =a f_{t}^{m}$
independently, and converges to zero. This makes the last element of the
eigenvector associated with root $1$ equal to zero, so there is no density
for it at the stationary distribution: $f_{\infty }^{m}=0$.

\subsection{Density dependent imitation}

\label{density}

In Proposition 1 we solved for the densities in terms of exogenous imitation
rates $q_{i},$ $i=1,\ldots,m-1,$ with $m>2$. Now we consider the case that
the imitation rates are proportional to densities. We are again seeking a
stationary solution.

If imitation is similar to learning from another firm, then imitation rates
should be proportional to the number of firms to learn from, or the density
at the corresponding ladder point. Learning is then conditional on meeting
another firm with higher productivity, which happens with a probability
proportional to the density there. Therefore, we let imitation rates be 
\begin{equation}
q_{j}\equiv q_{j}^{t}=\mu f_{t}^{j+1}\ ,\ j=1,\ldots,m-1\ .  \label{dens}
\end{equation}
(Recall that $q_j$ is the probability of jumping to site $j$ at time $t+1$,
which is the same as site $j+1$ at time~$t$ because of the relabeling at
each time step; this is why $q_j$ is proportional to $f_t^{j+1}$ and not $f_t^j$.) Here, $\mu$, which is determined by normalization, is
time-dependent: as seen below, it can be written as a function of $f_t^1$.
The highest $f_{t+1}^{m}$ is not \textit{imitated} because it is not
available for for imitation yet, so $q_{m}$, which represents leapfrogging,
is independent of the densities, as in section~\ref{sub:exo}. Observe that
the problem is now non-linear, so existence and uniqueness of a solution are
more involved than in the linear case. A stationary solution is again $f_{\infty }=A(f_{\infty })f_{\infty }$.

We first determine $\mu$: with $m$ fixed, we must have 
\begin{equation}
1=f_{t}^{1}+\cdots+f_{t}^{m}  \label{norm1}
\end{equation}
and 
\begin{equation}
1=q_{1}+q_{2}+\cdots +q_{m}\ .  \label{norm2}
\end{equation}
Therefore, in order to find a solution, $\mu $ cannot take arbitrary values
but will be determined (together with $f_{\infty }$) as a function of $q_{m}$. Indeed, inserting (\ref{dens}) into (\ref{norm2}), and using (\ref{norm1})
we obtain: 
\begin{equation}
1=q_{m}+\mu \sum_{j=1}^{m-1}f_{t}^{j+1}=q_{m}+\mu (1-f_{t}^{1})\ .
\end{equation}
This implies that, assuming that $f_{t}^{1}<1$ 
\begin{equation}  \label{mut}
\mu \equiv \mu _{t}=\frac{1-q_{m}}{1-f_{t}^{1}}\ .
\end{equation}
Overall, in this subsection, the two parameters $m$ and $q_{m}$ determine
all other quantities, including $\mu$. Note that the reason for which we are
not free to choose $\mu$ is that we insist, in our model, that the lowest
occupied site be emptied at each time step. This condition leads to \eqref{norm2} and then to \eqref{mut}. In section~\ref{sec:BRW}, we briefly
discuss a model where $\mu$ is an arbitrary parameter and the lowest
occupied site is not necessarily emptied at each time step.

For the stationary solution, we write $x_{j} = f_{\infty}^{j}$ as before and 
\begin{equation}  \label{mu}
\mu = \frac{1-q_{m}}{1-x_{1}} \ .
\end{equation}

\begin{remark}
\label{x11} $x_{1}=1$ is never a solution. If $x_{1}=1$, $x_{j}=0$, $\forall
j=2,\ldots,m$. Then, either $q_{m}=1$ and $q_{j}=0$, $\forall j=1,\ldots,m-1$,
in which case the last line of (\ref{dyn}) reads $x_{m}=(1-a)q_{m}x_{1}=0$. For
$a<1$, this is only possible if $q_{m}=0$, a contradiction. Or, if $q_{m}<1$, $\mu =\infty $ and the problem is not well-defined.
\end{remark}

\begin{proposition}
\label{endogenous}Under the assumptions above, with $q_{m}\in \left( 0,1
\right)$, 
\begin{equation}  \label{xi}
x_{i} = q_{m}x_{1}\left( 1+ \frac{1-q_{m}}{1-x_{1}} x_{1}\right) ^{m-i}\ ,
\qquad i=1,\ldots,m , \ 
\end{equation}
where $x_{1} \in [0,1)$ is the unique solution to 
\begin{equation}  \label{maineq}
1= q_{m} \left( 1 + \frac{1-q_{m}}{1-x_{1}} x_{1} \right) ^{m-1} \ ,
\end{equation}
or 
\begin{equation}  \label{x1}
x_{1} = \frac{ (q_{m})^{ - \frac{1}{m-1}} - 1}{(q_{m})^{ - \frac{1}{m-1}} -
q_{m}} \ .
\end{equation}
\end{proposition}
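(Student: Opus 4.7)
The plan is to reuse the algebraic content of Proposition~\ref{exogenous} and then close the resulting system using the density-dependent definition of the $q_j$. The key observation is that the derivation in Proposition~\ref{exogenous} only uses the structure of the matrix $A$, not the fact that the $q_j$ are exogenous constants: the relation $x_s=Q_s x_1$ with $Q_s=\sum_{j=s}^m q_j$ follows purely from the stationarity equation $f_\infty = A(f_\infty) f_\infty$ together with the bottom-row and inductive identities, once we treat the (now self-consistently determined) $q_j$ as given numbers. So my first step is to invoke that identity unchanged, giving $x_j = Q_j x_1$ for $j=1,\dots,m$ and the boundary values $Q_m = q_m$, $Q_1 = 1$.

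Next I would turn $x_j=Q_j x_1$ into a recursion for $Q_j$ using the density-dependent assumption. For $j\le m-1$, we have $q_j = Q_j - Q_{j+1}$ and, by \eqref{dens}, $q_j = \mu\, x_{j+1} = \mu\, Q_{j+1} x_1$; combining these yields the one-step recursion
\begin{equation*}
Q_j = (1+\mu x_1)\, Q_{j+1}, \qquad j=1,\dots,m-1,
\end{equation*}
where $\mu$ is the constant $\mu = (1-q_m)/(1-x_1)$ coming from \eqref{mu}. Iterating down from $Q_m = q_m$ gives $Q_j = q_m(1+\mu x_1)^{m-j}$, hence $x_j = q_m x_1 (1+\mu x_1)^{m-i}$, which is exactly \eqref{xi}. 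Imposing the boundary condition $Q_1 = 1$ then produces \eqref{maineq}, and substituting $\mu = (1-q_m)/(1-x_1)$ and solving algebraically for $x_1$ gives \eqref{x1}.

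The remaining task is to verify that \eqref{maineq} has a unique solution in $[0,1)$. Writing $\phi(x) = 1 + \frac{(1-q_m)x}{1-x} = \frac{1 - q_m x}{1 - x}$, a quick derivative check gives $\phi'(x) = (1-q_m)/(1-x)^2 > 0$, so $\phi$ is strictly increasing on $[0,1)$ with $\phi(0)=1$ and $\phi(x)\to\infty$ as $x\uparrow 1$. Therefore $g(x) := q_m\,\phi(x)^{m-1}$ is continuous and strictly increasing from $g(0)=q_m<1$ to $+\infty$, and by the intermediate value theorem the equation $g(x_1)=1$ has a unique root in $[0,1)$. Finally, I would check $x_1\ge 0$ and $x_1<1$ directly from the closed form \eqref{x1} using $q_m\in(0,1)$ (both numerator and denominator are positive, and $q_m^{-1/(m-1)}-1 < q_m^{-1/(m-1)}-q_m$ iff $q_m<1$).

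The only subtle point is justifying that the linear identity $x_s = Q_s x_1$ from Proposition~\ref{exogenous} still applies when the $q_j$ themselves depend on the unknowns $x_j$; this is where the nonlinearity of the problem could be misleading. It is legitimate here because at a fixed stationary profile the $q_j$ take definite numerical values, so Proposition~\ref{exogenous}'s purely algebraic row-by-row induction goes through verbatim; the nonlinearity is then absorbed entirely into the scalar equation \eqref{maineq} for~$x_1$, whose monotonicity argument is the main (and only mildly delicate) analytic step.
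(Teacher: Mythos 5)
Your argument follows essentially the same inductive scheme as the paper's proof: the key recursion $x_j=(1+\mu x_1)x_{j+1}$ is identical, you simply route it through Proposition~\ref{exogenous}'s $Q_j$ notation instead of writing it directly on the $x_j$'s, and your observation that Proposition~\ref{exogenous}'s purely algebraic row-by-row induction applies verbatim to the nonlinear stationary profile is sound. You also add something the paper leaves implicit, namely a clean monotonicity argument for uniqueness of the root of \eqref{maineq}, which is a nice touch even though the explicit formula \eqref{x1} already settles it.

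The one step you omit, and which the paper's proof does carry out explicitly, is the verification that the constructed vector is actually a probability distribution, i.e.\ that $\sum_{i=1}^{m}x_i=1$. Your derivation establishes necessary conditions (any stationary profile has the claimed form), but to close the loop on existence you still need to confirm normalization. The paper does this via the geometric sum
\begin{equation*}
\sum_{i=1}^{m}x_i=q_m x_1\,\frac{(1+\mu x_1)^m-1}{\mu x_1}=\frac{(1+\mu x_1)-q_m}{\mu}=1,
\end{equation*}
using \eqref{cons} for the second equality and the rearranged identity $\mu=(1+\mu x_1)-q_m$, read off from \eqref{mu}, for the last. It is short, but without it your argument proves uniqueness of the stationary profile rather than its existence as a bona fide distribution. (Also a small typo: you wrote $x_j=q_m x_1(1+\mu x_1)^{m-i}$ where the exponent should be $m-j$.)
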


\begin{proof}
We give a recursive proof. The last line of \eqref{dyn} gives again
$x_{m}=q_{m}x_{1}$. Replacing $q_{i}=\mu x_{i+1}$ for $i=1$ to $m-1$ in
(\ref{dyn}), we again proceed by induction and assume that (\ref{xi}) holds
for $i= m-(j-1)$. Then 
\begin{align*}
x_{m-j}&= x_{m-(j-1)} + q_{m-j}x_{1} = x_{m-(j-1)} (1+ \mu x_{1}) \\
\Rightarrow x_{m-j} &= x_{m} (1+\mu x_{1})^{j} = q_{m}x_{1}\left( 1+\mu
x_{1}\right) ^{j} \ .
\end{align*}
This finishes the induction proof. Relabeling $i=m-j$, we obtain (\ref{xi}).

For existence of a solution, we need that 
\begin{equation}
x_{1} = (1+\mu x_{1})^{m-1} q_{m} x_{1}
\end{equation}
or 
\begin{equation}  \label{cons}
(1+\mu x_{1})^{m-1} q_{m} = 1 \ .
\end{equation}
Inserting the expression for $\mu$, (\ref{mu}), this gives equation (\ref
{maineq}). Let us check that the solution thus obtained is normalized; we have 
\begin{equation*}
\sum_{i=1}^{m} x_{i} = \sum_{j=0}^{m-1}(1+\mu x_{1})^{j} q_{m} x_{1}
= \frac{(1+\mu x_{1})^{m}-1}{(1+\mu x_{1})-1} q_{m}x_{1} = \frac{(1+\mu
x_1)-q_m}{\mu} \ ,
\end{equation*}
where we have also used (\ref{cons}). But according to equation (\ref{mu}),
one has $\mu = 1+ \mu x_{1} - q_{m}$, and so we conclude that 
\begin{equation*}
\sum_{i=1}^{m} x_{i} = 1
\end{equation*}
Therefore, a solution to (\ref{cons}) with $\mu $ given by (\ref{mu}) gives
rise to a normalized $x$, as summed up in equation (\ref{maineq}). Inserting
the expression (\ref{x1}) for $x_{1}$ proves existence of a solution. This
finishes the proof of Proposition 2.
\end{proof}

\begin{corollary}
If $q_{m} = 0$, there is no stationary solution.
\end{corollary}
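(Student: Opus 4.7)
The plan is to argue by contradiction, using the recursive structure already exploited in the proof of Proposition~\ref{endogenous} together with Remark~\ref{x11}. Assume a stationary solution $x=(x_{1},\ldots,x_{m})$ exists when $q_{m}=0$. By Remark~\ref{x11}, we must have $x_{1}<1$, so the formula $\mu=(1-q_{m})/(1-x_{1})$ from \eqref{mu} yields a finite, well-defined $\mu$.

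First I would read off the last row of \eqref{dyn}: stationarity gives $x_{m}=q_{m}x_{1}=0$. Next I would use the density-dependent specification $q_{j}=\mu x_{j+1}$ for $j=1,\ldots,m-1$: since $x_{m}=0$ and $\mu$ is finite, $q_{m-1}=\mu x_{m}=0$. Plugging this into the $(m-1)$-th row of \eqref{dyn} yields $ax_{m-1}+(1-a)x_{m}=x_{m-1}$, hence $(1-a)x_{m-1}=0$; since $a\in(0,1)$, this forces $x_{m-1}=0$. I would then iterate this argument downward: if $x_{i+1}=0$, then $q_{i}=\mu x_{i+1}=0$, and the $i$-th row of \eqref{dyn} collapses to $ax_{i}+(1-a)x_{i+1}=x_{i}$, giving $x_{i}=0$. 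By finite induction, $x_{i}=0$ for every $i=2,\ldots,m$.

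Finally, the normalization $\sum_{i=1}^{m}x_{i}=1$ forces $x_{1}=1$, directly contradicting Remark~\ref{x11} (which showed that $x_{1}=1$ is inconsistent either with the last row of \eqref{dyn} or with the finiteness of $\mu$). This contradiction proves that no stationary solution exists when $q_{m}=0$.

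The argument is short, and the main subtlety is not a calculation but making sure that $\mu$ can be treated as a finite number throughout the induction; this is what lets the density-dependent identity $q_{j}=\mu x_{j+1}$ propagate zeros up the ladder. Since Remark~\ref{x11} already handles the case $x_{1}=1$ (where $\mu$ would blow up), the two cases $x_{1}<1$ and $x_{1}=1$ together exhaust all possibilities and both lead to a contradiction. Alternatively, one could take the limit $q_{m}\to 0^{+}$ in the closed form \eqref{x1}, where $x_{1}\to 1$, and invoke Remark~\ref{x11} to note that the limiting profile is not admissible; I would mention this only as a sanity check rather than the main line of proof, since the inductive argument above is self-contained.
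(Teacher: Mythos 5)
Your proof is correct and follows essentially the same inductive route as the paper: use $x_m=q_m x_1=0$, then propagate zeros downward via $q_{j}=\mu x_{j+1}$ and the row equations $x_j=ax_j+(1-a)x_{j+1}$. The only cosmetic difference is that the paper pushes the recursion all the way to $j=1$ (getting $x_j=0$ for all $j$, contradicting normalization), whereas you stop at $j=2$ and then invoke Remark~\ref{x11}; your explicit note that finiteness of $\mu$ is what justifies $q_{j}=\mu x_{j+1}=0$ is a small but genuine point of added care.
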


\begin{proof}
For $q_{m}=0$, we have $x_{m}=0$. Using equation (\ref{dyn}), this implies
that 
\begin{equation}
x_{m-1} = \mu 0 x_{1} + (1-a) 0 + a x_{m-1} \ ,
\end{equation}
which implies that $x_{m-1}=0$ for $a<1$. By recursion, $x_{j} =0$ $\forall
j $, and there is no solution.
\end{proof}

While there is no stationary solution for $q_{m}=0$, the limit of the
dynamics may nevertheless converge to a distribution with $x_{1} \to 1$ and
$x_{i} \to 0$ for $i>1$. Recall that $\{x_i\}=\{1,0,0,\ldots\}$ is not a
stationary state, because it would lead to $\mu=0$ and a ill-defined model.
This case is easily illustrated for $m=2$.

\begin{example}
Dynamics for $m=2$, $q_{2}=0$. We start from a density $f_{0}$ with $f^{1}_{0}=1-f^{2}_{0}$ and $f^{2}_{0}>0$ (else, as already pointed out, we
would have $\mu \rightarrow \infty $). Then the dynamics for $f^2_t$ reduce
to 
\begin{equation*}
f^2_{t+1}= a f^2_t
\end{equation*}
and hence 
\begin{equation*}
f^2_t = f^2_0 a^t \to0\quad\text{as $t\to\infty$} \ .
\end{equation*}
By normalization, 
\begin{equation*}
f^1_t = 1-f^2_t=1-f^2_0 a^t \to1\quad\text{as $t\to\infty$}
\end{equation*}
\end{example}

We can observe that because $q_{m}=0$, the upper level of the density is
falling over time as only a fraction $a$, namely the innovators, remains
there each period. In the general case, all upper levels will successively
experience such a decline in population. Because imitation is proportional
to the number of firms present at the productivity level, fewer and fewer
firms will flow into the higher steps of the ladder, which will be
successively depopulated. In the limit, a single ladder step survives.

We would not think that the problematic asymptotic behavior for $q_{m}=0$ is
a major drawback of this model. Surely there are some highly innovative
firms who leapfrog to the highest operational productivity levels, so that
the case $q_{m}=0$ may be economically less interesting.

\begin{example}
We provide numerical illustrations for the stationary densities for $m=10$
and $q_{m}\in \left\{ 0.1\,;\,0.3\,;\,0.5\,;\,0.99\right\}$. The solutions
for $\mu$ and $x_1$ are 
\begin{equation*}
\begin{cases}
\mu=1.1915,\quad x_1=0.2447 & \text{if }q_m=0.1 \\ 
\mu=0.8431,\quad x_1=0.1698 & \text{if }q_m=0.3 \\ 
\mu=0.5801,\quad x_1=0.1380 & \text{if }q_m=0.5 \\ 
\mu=0.0111,\quad x_1=0.1005 & \text{if }q_m=0.99
\end{cases}
\end{equation*}
These values can be computed from $\mu = (q_m)^{-\frac1{m-1}}-q_m$, which is
obtained from \eqref{mu} and \eqref{x1}.

Figure~\ref{statdens} plots the solution for $\left\{ x_{i}\right\}
_{i=1}^{m}$ for these four cases.

\begin{figure}[ht]
\includegraphics[scale=0.3]{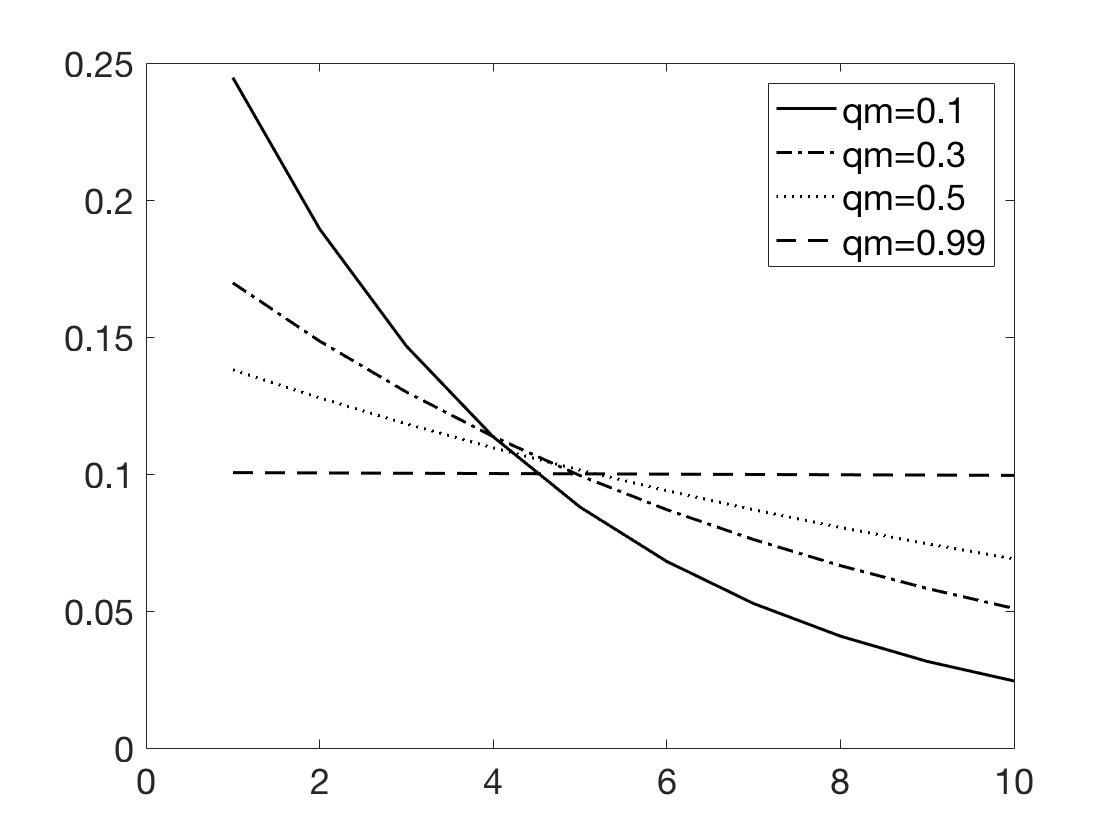} \centering
\caption{Stationary densities for different values of leapfrogging intensity 
$q_{m}$ with $m=10$.}
\label{statdens}
\end{figure}

Notice that higher values of $q_{m}$, or higher leapfrog values, flatten the
productivity distribution. As $q_{m}\rightarrow 1,$ we have $\mu \rightarrow
0$ and\ $x_{i}\to m^{-1}$ for $i=1\,\ldots,m$, so the distribution is
uniform. The stationary density gets increasingly concentrated at the lower
boundary of the productivity ladder as $q_{m}$ gets smaller.

We note that in a continuous time version of this model with a continuum of
firm productivities, with growth driven by imitation as well as by
leap-frogging innovation to the frontier that is governed by a finite Markov
chain, Benhabib, Perla and Tonetti (2017) also show that there exists a
stationary productivity distribution evolving as a travelling wave with
compact and bounded support.
\end{example}

\section{Endogenizing the length of productivity distributions}

\label{mlength}

In the previous section, only the firms at the lowest level $j=1$ would
innovate or imitate at each time step. We now
allow firms at any level $j\leq m$ to choose to
leapfrog or imitate by paying a cost, if the firm estimates that it is
profitable to do so. We first consider the simpler case of leapfrogging
only (no imitation) in section~\ref{leaponly}, and we consider the
full model with leapfrogging and imitation in section~\ref{leapim}.
In the first case (leapfrogging only, section~\ref{leaponly}),
we will show that firms will only
choose to leapfrog or imitate at or below a certain threshold level $j_{0}$
which is independent of time\footnote{As in the previous sections, it is understood that after each time step the
levels are relabeled (so that level $i$ at time $t$ becomes level $i-1$ at
time $t+1$). Then, the highest occupied productivity level is always $m$ at
the beginning of each time step.}. Then with $j_{0}$ determined, the ladder
length is fixed and given by $m-j_0+1$, and the results of section \ref{nolf} with $q_{m}=1$ apply.
In the second case (leapfrogging and imitation, section~\ref{leapim})
we will also show that firms optimally choose
to incur a cost for the opportunity to imitate or leapfrog at or below
a certain threshold level $j_{0}(t)$, but this level might depend on time.
In this case, characterizing the transition dynamics
of the productivity distribution is not straightforward, but if we assume
the system reaches a stationary distribution, then $j_0(t)$ converges to
some $j_0(\infty)$ and the results of Proposition \ref{endogenous} in
section~\ref{density} can be applied with the distribution of firms supported
on an interval of size $m-j_0(\infty)+1$.

In either case, the finite size of the support is now endogenized and
depends on what is the cost to imitate, what are the payoffs at each
quality level, etc.

\subsection{Leapfrogging Only}

\label{leaponly}

We assume that a firm still faces an exogenous probability of innovation~$a$
as in sections~\ref{sub:exo} and~\ref{density} but, when the firm fails to
innovate, it is allowed to make a
choice to leapfrog (and pay some cost) or not. The firm's optimal choice problem is to maximize
its value function, \textit{i.e.\@} the expected discounted value of current and future payoff streams net of
costs\footnote{Under the assumption of linear utility, the benefit of payoffs to the firm
are the payoffs themselves. \textquotedblleft Expected\textquotedblright\
refers to the fact that the firm might have to anticipate the future firms density
in order to project imitation probabilities and thus payoffs.
\textquotedblleft Discounting\textquotedblright\ with a constant
intertemporal discounting factor $\beta _{0}$ as usual reflects the fact
that the firm values the future less than the present. For the reader
unfamiliar with dynamic optimal choice problems, we refer for example to
Lucas and Stokey (1989) or Ljungqvist and Sargent (2018).}. The firm's
choice is to choose for every period whether to pay a cost to leapfrog and
benefit from higher payoffs now and in the future, or not to do so.

When evaluating if it is advantageous to take some action or not, a firm
usually needs to anticipate the future distribution in order to have
expectations for imitation probabilities and outcomes. In the case of
leapfrogging only which we consider now, we will see
that it is actually enough to know the position of the frontier $m=m(t)$,
which remains constant (after relabeling) and equal to its initial position $m(0)$. (Without relabeling, we would have $m(t)=m(0)+t$ due to innovation.)
Therefore, in the case of leapfrogging only, the outcome, when the firm
decides whether to leapfrog or not, depends only on the initial value of $m$; it does not depend on the distribution of firms on the quality ladder, nor on time. (Note that this
will no longer be true when we add imitation in section~\ref{leapim}: with
imitation, it is
necessary to anticipate future distribution of firms to make an optimal
choice.)

As is usual in economics, this optimal choice problem can be reformulated in
a recursive way using a Bellman equation, that we will write down below. We
will show here that the firm's optimal choice is to leapfrog if it lies at a
fixed length below the frontier. This fixed length becomes the new support
size, thus providing a microfoundation to the previously exogenous support
size~$m$.

Every time step, at every level, a firm innovates and moves up one ladder
step with probability $a$. The firms that do not innovate have the choice
either to fall behind, or to catch up with the highest productivity level $m$
(after relabeling, or $m+1$ before) by paying a cost. We assume that it is
not possible to ``imitate'' intermediate
levels.

We assume that the payoffs realized by a given firm increase by some factor
$\lambda>1$ each time the firm takes a step on the quality ladder, and we
introduce the \emph{normalized payoffs} $p_j=\lambda^j$ for a firm being at
level $j\in\{1,\ldots,m\}$. \footnote{Remember that levels are relabeled at
each time step, so level $1$ (for
instance) at different times correspond to different quality levels with
different payoffs. At a given time step, the real payoffs of the different
firms can be obtained by multiplying the \emph{normalized payoffs} $p_j$ by
$\lambda^t$.}

If, as the firm distribution moves up the quality ladder, costs to implement
leap-frogging grow at the same rate $\lambda$ as the payoffs, the firm
problem can be reduced to a stationary problem where the normalized payoffs
$p_j=\lambda^j$ and the \emph{normalized cost} $C$ are independent of time.

Firms, in deciding whether to leapfrog or not, compare the costs to the
expected payoffs. As normalized payoffs increase over the ladder, while
normalized costs do not, firms choose to leapfrog if their distance to the
frontier (the level $m$ of the highest performing firm) is larger than some
threshold, and choose not to leapfrog if their distance to the frontier is
smaller than that threshold.

In other words, there must be a certain threshold level $j_0$ such that a
firm chooses not to leapfrog for productivity levels $j=j_{0}+1,\ldots,m$,
but does leapfrog at levels $j\le j_{0}$. We now provide a formal argument.

Let $V_\text{LF}(j)$ be the value of leapfrogging from some level $j$ and
$V_\text{NLF}(j)$ the value of not leapfrogging from this level.
Then, the value of being at productivity level $j$ is 
\begin{equation}  \label{bel1}
V(j) = \max \big\{ V_\text{LF}(j) , V_\text{NLF}(j)\big \} \ .
\end{equation}
The following equation represents the leapfrogging choice. We have 
\begin{align}  \label{bel2}
V_\text{LF}(j) & = p_{j}+\beta aV ( j) +(1-a) \big[ \beta V( m) - C \big],
\end{align}
where $\beta=\lambda\beta_0$, with $\beta_{0} < 1$ the intertemporal
discount factor; we assume that $\beta <1$. This is the Bellman equation for
the leapfrogging value function, which determines the optimal choice
recursively. The first term on the right-hand side is the payoff received
this period. Then, with probability $a$, the firm innovates and moves up one
step from $j$, which after relabeling becomes $j$, and this continuation
value is discounted with $\beta$. With probability $(1-a)$, the firm does
not innovate but decides to leapfrog; the firm moves above the frontier, at
level $m+1$ (which after relabeling becomes level $m$) and pays the cost $C$.

Similarly, 
\begin{equation}  \label{bel3}
V_\text{NLF}(j) = p_{j} +\beta aV ( j) +\beta (1-a) V( j-1 ) .
\end{equation}

Notice that neither \eqref{bel2} nor \eqref{bel3} depend on the densities $f$
or on time; the value $V(j)$ of being at some level $j$ remains constant in
time.

The firm wants to leapfrog from some level $j$ if leapfrogging is
beneficial, \textit{i.e.} 
\begin{gather}  \label{lf1}
V_\text{LF}(j) > V_\text{NLF}(j) \ ,  \tag{i}
\end{gather}
and does not want to leapfrog if 
\begin{gather}  \label{lf2}
V_\text{LF}(j) < V_\text{NLF}(j) \ .  \tag{ii}
\end{gather}

Observe from \eqref{bel2} and \eqref{bel3} that 
\begin{equation}\label{DeltaV}
V_{\text{LF}}(j)-V_{\text{NLF}}(j)=(1-a)\big[\beta V(m)-C-\beta V(j-1)\big].
\end{equation}

We assume from here on that the value function $V(j)$
increases with the productivity level $j$. This property is not an obvious
consequence of Bellman's equation, but it seems clear that any mathematical
solution where $V(j)$ is not increasing cannot reasonably describe
a real-life situation, because it would mean that some firms should degrade
the quality of their production in order to increase their value.

Then, from~\eqref{DeltaV} and using that $V(j)$ increases in $j$, the quantity $V_{\text{LF}}(j)-V_{\text{NLF}}(j)$
decreases with $j$. Hence, if leapfrogging is
beneficial at $j$, it is even more so at $j-1$, $j-2$, etc. Similarly, if
leapfrogging is not beneficial at $j$, it will be even less so at $j+1$, $j+2$, etc.
 In other
words, there must be a threshold level $j_{0}$ such that 
\begin{equation*}
\text{A site leapfrogs if and only if }j\leq j_{0}.
\end{equation*}

Assume we let this system evolve from an initial condition where the highest
occupied site is $m$. At the end of the first time step, site $m+1$ is
occupied (through innovation and leapfrogging) and all sites up to and
including $j_0$ are emptied through leapfrogging. At the start of the second
time step, after relabeling, the system occupies a subset of sites $\{j_0,\ldots,m\}$. Then, 
at each following time step, only site $j_0$ gets
emptied through leapfrogging and the system remains in $\{j_0,\ldots,m\}$
after relabeling. In the large time limit, the system reaches its stationary
state, which is a uniform distribution over $\{j_0,\ldots,m\}$.

This behavior we have just described is very similar to the behavior of the
system in section~\ref{sub:exo} with $q_m=1$, except that the lowest
occupied site is now $j_0$ instead of 1 in section~\ref{sub:exo}. In other
words, the size of the support is now $m-j_0+1$ instead of $m$. This size of
support depends on the parameters of the model: $a$, $\lambda$, $C$ and $\beta$. (Using 
invariance by translation, it is easy to see that $m-j_0+1$
does not depend on $m$.) This means that the size of the support result from
an endogenized optimum between costs and expected payoffs. By adjusting the
values of the different parameters, any size of support can be obtained.

Through invariance by translation, one can shift the whole system on the
value scale so that the support is on $\{1,\ldots ,m^{\prime }=m-j_{0}+1\}$.
Then, the model is even more similar to section~\ref{sub:exo} with $q_{m}=1$, with the lowest 
occupied level at $j=1$ and with the endogenized $m^{\prime }$ being both the highest 
occupied site and the size of the
support.

\subsection{Leapfrogging and imitation}

\label{leapim}

We now introduce density-dependent imitation as in the section~\ref{density}. A firm innovates at no cost with probability $a$ and, if it does not
innovate, it can choose to pay a fixed cost $C$ to randomly leapfrog or
imitate with density-dependent imitation probabilities, or it can forgo this
opportunity. We are already assuming mean-field dynamics, which are valid in
the limit of a large number of firms. We can therefore safely assume that
the choice taken by a single firm does not impact the distribution.

Recall the following assumptions made in section~\ref{density}: at time $t$,
when a firm chooses to innovate or leapfrog, it jumps with probability $q_{j} $ ( with $j\in \{1,\ldots ,m\}$) onto site $j+1$ which, after
relabeling, becomes site $j$ at time $t+1$. Then, $q_{m}$ is the
(exogenously given) probability of leapfrogging (since site $m+1$ is empty
at time $t$) and $q_{j}$ for $j\leq m-1$ is the probability of imitating. We
assume that for $j\leq m-1$, the probability of jumping on site $j+1$ (site $j$ after relabeling) is proportional to the proportion of firms $f_{t}^{j+1}$
on that site at the beginning of the time step, and we write $q_{j}=q_{j}(f)=(1-q_{m})f_{t}^{j+1}$ for $j\leq m-1$. The value of $1-q_{m}$
of the prefactor is chosen in such a way that the probabilities are
normalized: $\sum_{j\leq m}q_{j}(f)=1$.

The value $V(j;f)$ of being at a site $j$ now depends on the density $\{f_{t}^{k}\}$ of firms at all sites for the current time. As in section~\ref{leaponly}, we write $V_{NLF}(j;f)$ for the value of being at $j$ and
choosing not to imitate/leapfrog given the current densities $f=\left\{
f^{k}\right\} $, and $V_{LF}(j;f)$ for the value of being at site $j$ and to
imitate/leapfrog. The Bellman equations become 
\begin{align}
V(j;f_{t})& =\max \Big(V_{NLF}(j;f_{t}),V_{LF}(j;f_{t})\Big), \\
V_{NLF}(j;f_{t})& =p_{j}+\beta aV(j;f_{t+1})+\beta (1-a)V(j-1;f_{t+1}), \\
V_{LF}(j;f_{t})& =p_{j}+\beta aV(j;f_{t+1})+(1-a)\bigg(\beta
\sum_{k=j}^{m}q_{k}(f_t)V(k;f_{t+1}) \\
& \hspace{5.5cm}+\beta \sum_{k<j}q_{k}(f_t)V(j-1;f_{t+1})-C\bigg)  \notag \\
& =p_{j}+\beta aV(j;f_{t+1})+(1-a)\bigg(\beta V(j-1;f_{t+1})-C\\
& \hspace{2.5cm}+\beta \sum_{k=j}^{m}q_{k}(f_{t})\big[V(k;f_{t+1})-V(j-1;f_{t+1})\big] 
\bigg) 
\notag
\end{align}
(We assume that when it chooses to imitate/leapfrog, the firm first pays the
price $C$. Then it selects a target to imitate with the probabilities $q_{k}$, but if that target is below its own quality, then the firm finally decides
to not move. Recall that $p_{j}=\lambda^j$, $C$, and $\beta$ are normalized
quantities (respectively payoffs, cost and discounting factor) as
in the previous section. The second equality for $V_{LF}(j;f_{t})$ is
obtained using $\sum_{k}q_{k}(f_{t})=1$.)

Notice that 
\begin{equation*}
V_{LF}(j;f_{t})-V_{NLF}(j;f_{t})=(1-a)\bigg(\beta \sum_{k=j}^{m}q_{k}(f)\big[V(k;f_{t+1})-V(j-1;f_{t+1})\big]-C\bigg)
\end{equation*}
and, then, that 
\begin{align*}
& \Big[V_{LF}(j;f_{t})-V_{NLF}(j;f_{t})\Big]-\Big[V_{LF}(j+1;f_{t})-V_{NLF}(j+1;f_{t})\Big] \\
& \qquad \qquad \qquad =(1-a)\beta \bigg(\sum_{k=j}^{m}q_{k}(f_{t})\bigg)
\big[V(j;f_{t+1})-V(j-1;f_{t+1})\big]
\end{align*}
We assume, as in the previous section, that  $V$ increases with $j$.
Then 
$V_{LF}(j;f_{t})-V_{NLF}(j;f_{t})$ decreases with $j$. We conclude, as in
section~\ref{leaponly}, that if imitaton or leapfrogging is advantageous at
a given level $j$, it is even more advantageous at lower levels and that
there must be, for each time $t$, a threshold level $j_{0}(t)$ such that a
firm imitates or leapfrogs if $j\leq j_{0}(t)$ and does not otherwise.

Note that $j_{0}(t)$ depends on time because the values $V(k;f)$ and the
jumping probabilities $q_{k}(f)$ depend on the densities $f$ which depend on
time. In the stationary regime (assuming the system does reach a stationary
regime), $j_{0}(t)=j_{0}$ becomes constant, the firms
live in $\{j_{0},\ldots ,m\}$ and only site $j_{0}$ decides to imitate or
leapfrog. Therefore, the stationary regime is the same as the one described
in section~\ref{density}, but with an endogenized support size, $m-j_{0}+1$.

We have not discussed the case of separate choices for endogenous innovation
and imitation, as analyzed by various authors, like Jovanovic and Rob
(1989), Romer (1990), Aghion and Howitt (1992), Hoppenhyn (1992), Segestrom
(1991), Klette and Kortum (2004), Luttmer (2007), Lucas and Moll (2014),
Konig et al (2016), Benhabib, Perla and Tonetti (2014, 2017). In our ladder
model with firms innovating from every level, separate costs might be chosen
trivially such that firms always want to only innovate or only imitate.

\section{Models with a finite number of firms: \texorpdfstring{$N$}{N}-BRW
and \texorpdfstring{$L$}{L}-BRW}

\label{sec:BRW}

In the models presented in the previous sections, the quantity $f_t^j$
represented the \emph{fraction} of firms with a quality $j$ at time $t$. If
the market is made of $N$ firms, then the number of firms with quality $j$
at time $t$ should be around $Nf_t^j$. If $N\to\infty$, then the number of
firms at a given quality is large for any $j$, the dynamics of the system is
dominated by average quantities and the evolution is deterministic. All the
models presented so far were assumed to be in this $N\to\infty$ limit. In
this section, we explore the effect of having a large but finite number of
firms $N$.

With a finite number of firms, the evolution of the system is intrinsically
stochastic. Consider for instance a single site $j$ containing $n_t^j$ firms
at time $t$, and assume each firm has a probability $a$ of innovating during
one time step. Then, the number of firms innovating is a Binomial random
number of parameters $n_t^j$ and $a$. On average, $an_t^j$ firms innovate
with a standard deviation $[a(1-a)n_t^j]^{1/2}$. On productivity levels
where $an_t^j\gg1$, the fluctuations are negligible compared to the average
behavior and stochasticity can be ignored. On the other hand, when $an_t^j$
is of order~1, the number of innovating firms is essentially random.

The models we consider in this section are stochastic versions of the model
described in section~\ref{density}. We still assume that firms live on a
discrete quality ladder and that time is discrete. At the beginning of any
time step, an active firm is characterized by its productivity level. Then,
during one time step, for each firm, two things can happen (independently).

\begin{itemize}
\item The firm can innovate with probability $a$, thus gaining one
productivity level.

\item The firm can be imitated with probability $\mu$ by a new entrant.
\end{itemize}

The four outcomes for a single firm are graphically represented in figure~\ref{outcomes}. 
\begin{figure}[ht]
\centering
\begin{tikzpicture}
\draw[<->] (0,5.5) node[above, xshift=1.4cm]{productivity level}
--(0,-.2)--(2.8,-.2)
node[below,xshift=-.3cm]{time};
\begin{scope}[xshift=7mm,yshift=5cm]
\node(a)[draw, circle, shading=ball,outer sep=2pt] {};
\node(b)[draw, circle, shading=ball,outer sep=2pt] at (1.5,0){};
\draw[->] (a)--(b);
\node[right] at (2,0) {no innovation, not imitated.};
\node[right] at (7.5,0) {Probability $(1-a)(1-\mu)$};
\end{scope}
\begin{scope}[xshift=7mm,yshift=4cm]
\node(a)[draw, circle, shading=ball,outer sep=2pt] {};
\node(b)[draw, circle, shading=ball,outer sep=2pt] at (1.5,-.1){};
\node(c)[draw, circle, shading=ball,outer sep=2pt] at (1.5,.1){};
\draw[->] (a)--(b);
\draw[->] (a)--(c);
\node[right] at (2,0) {no innovation, imitated.};
\node[right] at (7.5,0) {Probability $(1-a)\mu$};
\end{scope}
\begin{scope}[xshift=7mm,yshift=2.1cm]
\node(a)[draw, circle, shading=ball,outer sep=2pt] {};
\node(b)[draw, circle, shading=ball,outer sep=2pt] at (1.5,1){};
\draw[->] (a)--(b);
\node[right] at (2,.3) {innovation, not imitated.};
\node[right] at (7.5,.3) {Probability $a(1-\mu)$};
\end{scope}
\begin{scope}[xshift=7mm,yshift=.3cm]
\node(a)[draw, circle, shading=ball,outer sep=2pt] {};
\node(b)[draw, circle, shading=ball,outer sep=2pt] at (1.5,0){};
\node(c)[draw, circle, shading=ball,outer sep=2pt] at (1.5,1){};
\draw[->] (a)--(b);
\draw[->] (a)--(c);
\node[right] at (2,.3) {innovation and imitated.};
\node[right] at (7.5,.3) {Probability $a\mu$};
\end{scope}
\end{tikzpicture}
\caption{The four outcomes after a time step for a single firm.}
\label{outcomes}
\end{figure}
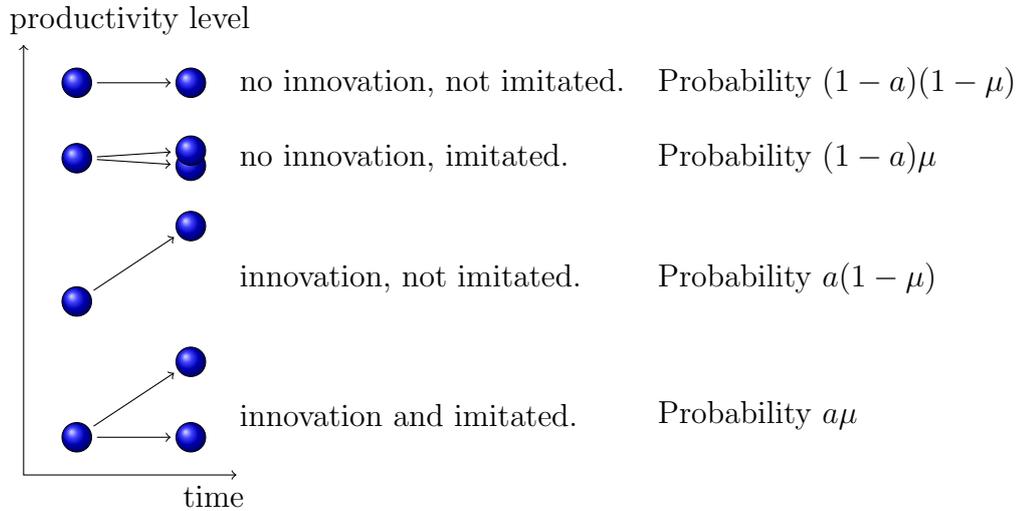

Note that in this section, and unlike in section~\ref{fixedsupport}, we do
not rename the productivity levels after each time step and we assume that $\mu$ is a parameter given exogenously.

The evolution of the whole system during one time step then comes in two
phases: 
\begin{equation}
\begin{cases}
\text{(a)} & \text{each firm present at time $t$ evolves independently
according} \\[-.5ex] 
& \text{to the probabilities in figure~\ref{outcomes},} \\ 
\text{(b)} & \text{a culling of the firms in the system occurs by removing}
\\[-.5ex] 
& \text{some firms at the bottom of the productivity scale.}
\end{cases}
\label{2phases}
\end{equation}

Note that in the evolution phase, the imitating firms can either be some
firms at the lowest productivity level who successfully imitate those above
them, or new entrants displacing firms at the lowest level of productivity.
There is no leapfrogging in this model.

We consider two variants of the model, depending on the way the culling
occurs. A first variant is to fix the number of firms at each time step to
an exogenous parameter $N$. Then, the number of removed firms during the
culling phase must be equal to the number of imitated firms in the evolution
phase to keep the total number of firms constant. This model is called a $N$-BRW ($N$ Branching Random Walk) and is discussed in section~\ref{NBRW}.

Another possibility for the culling phase is to remove all firms lagging $L$
productivity steps or more behind the most productive firm, with $L$ given
exogenously. In this variant, the total number of firms fluctuate with time.
This model is called a $L$-BRW ($L$ Branching Random Walk) and is discussed
in section~\ref{LBRW}.

In the following sections we characterize the shape and properties of the
productivity distributions in the $N$-BRW and $L$-BRW models of innovation
and imitation. The discussion is adapted from works that have been conducted
on KPP fronts since the late nineties in the context of statistical
mechanics, reaction-diffusion models and population genetics. A good point
of entry on this literature is Brunet (2016).

\subsection{The \texorpdfstring{$N$}{N}-BRW model}

\label{NBRW}

Before introducing the $N$-BRW, we need first to discuss what a BRW is. A
Branching Random Walk is a process in discrete time started from a single
particle at the origin. At each time step, each particle (each ``parent'')
is replaced by a random number of particles (the ``children'') positioned
relatively to the parent according to some point process. This rule is
applied independently at each generation for each particle.

\begin{figure}[!ht]
\centering
\begin{tikzpicture}[x=1.5cm]
\draw[<->] (0,3.1) node[above, xshift=.4cm]{position}
--(0,.2)--(3.8,.2)
node[below=-.5pt,xshift=-.3cm]{time};

\node(a)    [draw, circle, shading=ball,outer sep=2pt] at (.5,.5) {};
\node(aa)   [draw, circle, shading=ball,outer sep=2pt] at (1.5,.7) {};
\node(ab)   [draw, circle, shading=ball,outer sep=2pt] at (1.5,1.2) {};
\draw[->] (a) -- (aa);
\draw[->] (a) -- (ab);
\node(aba)  [draw, circle, shading=ball,outer sep=2pt] at (2.5,1.3) {};
\draw[->] (ab) -- (aba);
\node(aaa)  [draw, circle, shading=ball,outer sep=2pt] at (2.5,1.9) {};
\node(aab)  [draw, circle, shading=ball,outer sep=2pt] at (2.5,.9) {};
\draw[->] (aa) -- (aaa);
\draw[->] (aa) -- (aab);
\node(abaa)  [draw, circle, shading=ball,outer sep=2pt] at (3.5,1.3) {};
\node(abab)  [draw, circle, shading=ball,outer sep=2pt] at (3.5,1.8) {};
\node(abac)  [draw, circle, shading=ball,outer sep=2pt] at (3.5,2.3) {};
\draw[->] (aba) -- (abaa);
\draw[->] (aba) -- (abab);
\draw[->] (aba) -- (abac);
\node(aaba) [draw, circle, shading=ball,outer sep=2pt] at (3.5,.9) {};
\node(aabb)  [draw, circle, shading=ball,outer sep=2pt] at (3.5,2.8) {};
\draw[->] (aab) -- (aaba);
\draw[->] (aab) -- (aabb);
\node(aaab)  [draw, circle, shading=ball,outer sep=2pt] at (3.5,3.3) {};
\draw[->] (aaa) -- (aaab);

\begin{scope}[xshift=7cm]
\draw[<->] (0,3.1) node[above, xshift=.4cm]{position}
--(0,.2)--(3.8,.2)
node[below=-.5pt,xshift=-.3cm]{time};

\node(a)    [draw, circle, shading=ball,outer sep=2pt] at (.5,.5) {};
\node(aa)   [draw, circle, shading=ball,outer sep=2pt] at (1.5,.7) {};
\node(ab)   [draw, circle, shading=ball,outer sep=2pt] at (1.5,1.2) {};
\draw[->] (a) -- (aa);
\draw[->] (a) -- (ab);
\node(aba)  [draw, circle, shading=ball,outer sep=2pt] at (2.5,1.3) {};
\draw[->] (ab) -- (aba);
\node(aaa)  [draw, circle, shading=ball,outer sep=2pt] at (2.5,1.9) {};
\node(aab)  [draw, circle, shading=ball,ball color=black!40,outer sep=2pt,opacity=.05] at (2.5,.9) {};
\draw[->] (aa) -- (aaa);
\draw[->] (aa) -- (aab);
\node(abaa)  [draw, circle, shading=ball,ball color=black!40,outer sep=2pt,opacity=.05] at (3.5,1.3) {};
\node(abab)  [draw, circle, shading=ball,ball color=black!40,outer sep=2pt,opacity=.05] at (3.5,1.8) {};
\node(abac)  [draw, circle, shading=ball,outer sep=2pt] at (3.5,2.3) {};
\draw[->] (aba) -- (abaa);
\draw[->] (aba) -- (abab);
\draw[->] (aba) -- (abac);
\node(aaba) [draw, circle, shading=ball,ball color=black!40,outer sep=2pt,opacity=.05] at (3.5,.9) {};
\node(aabb)  [draw, circle, shading=ball,ball color=black!40,outer sep=2pt,opacity=.05] at (3.5,2.8) {};
\draw[->,opacity=.2] (aab) -- (aaba);
\draw[->,opacity=.2] (aab) -- (aabb);
\node(aaab)  [draw, circle, shading=ball,outer sep=2pt] at (3.5,3.3) {};
\draw[->] (aaa) -- (aaab);
\end{scope}
\end{tikzpicture}
\caption{Left: an exemple of BRW where, at each generation a particle can
have 1, 2 or 3 offsprings. Right: a $N$-BRW with $N=2$ obtained by keeping
at each time step only the two highest children of the surviving particles
of the previous time step. Notice that this rule is not the same as keeping
the two highest particles of the BRW at each time step.}
\label{BRW}
\end{figure}
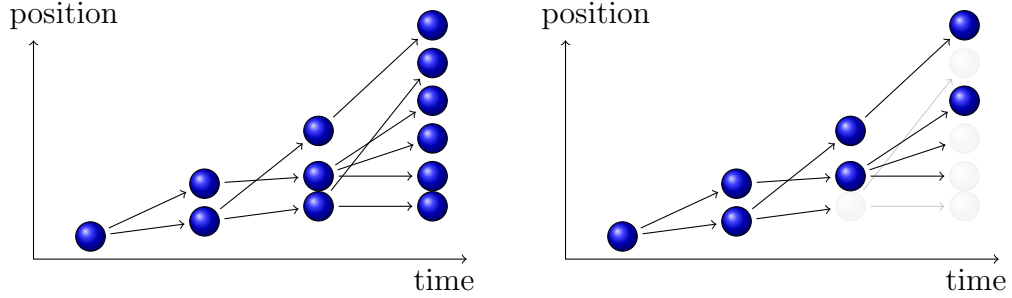

For instance, following figure~\ref{outcomes}, the rule could be that a
particle at $y$ gives either one particle at $y$, or two particles at $y$,
or one particle at $y+1$ or two particles respectively at $y$ and $y+1$. The
left part of figure~\ref{BRW} shows a BRW with a different rule where each
parent can have 1, 2 or 3 children.

Note that the number of particles $N_t$ at each generation follows the
following recursion: $N_{t+1}= \sum_{i=1}^{N_t} n^{i,t}$, where $n^{i,t}$ is
the number of children of individual $i$ at time $t$ and where it is assumed
that the $n^{i,t}$ are independent identically distributed random variables
over integers. This is called a Galton-Watson process. In other words, a BRW
is a Galton-Watson process where we keep as an extra information the
position of the particles. For simplicity, we exclude the possibility that a
particle has zero children and we insist that it has more than one child
with positive probability. Then, the population size increases exponentially
with time.

Denote by $(\epsilon_1, \epsilon_2,\ldots,\epsilon_n)$ the positions of the
children relative to the parent (both $n$ and the $\epsilon_i$ are random).
Then, under conditions on the laws of $n$ and $\epsilon_i$, listed for
example in Gantert, Hu and Shi (2011)\footnote{The conditions are: \newline
\hbox{\quad}a) $\mathbb{E}[n]>1$ (we excluded the case where $n$ can be 0,
and insisted that $n>1$ with positive probability, so this is automatic in
our case.) \newline
\hbox{\quad}b) there exists $\delta>0$ such that $\mathbb{E}
[n^{1+\delta}]<\infty$ (in other words, there are never too many children.
This is automatic if the number of children is bounded.) \newline
\hbox{\quad}c) there exists $\delta>0$ such that $\mathbb{E}\big[
\sum_{i=1}^n e^{\delta\epsilon_i}\big]<\infty$ (in other words, the children
are not created too much upwards relative to the parent. This is automatic
if the number of children $n$ and the displacements $\epsilon_i$ are
bounded, as in our case.) \newline
\hbox{\quad}d) there exists $\delta>0$ such that $\mathbb{E}\big[
\sum_{i=1}^n e^{-\delta\epsilon_i}\big]<\infty$ (in other words, the
children are not created too much downwards relative to the parent. This is
automatic if the number $n$ of children and the displacements $\epsilon_i$
are bounded.) \newline
\hbox{\quad}e) The function $v(\gamma)=\frac1\gamma \log \mathbb{E}\big[
\sum_{i=1}^n e^{\gamma \epsilon_i}\big]$, which is necessarily well defined
on some interval $(0,\delta)$ with $\delta\in(0,\infty]$, must reach a
minimum $v_c=v(\gamma_c)$ on that interval. It is automatic in the example
developed below for any $\mu\in(0,1]$ and $a\in(0,1)$, but for other
problems it might not be automatic.} (see also there for references), one
can show that the highest position $y_\text{max}(t)$ in the BRW at time $t$
increases linearly with time: 
\begin{equation}
\lim_{t\to\infty}\frac{y_\text{max}(t)}{t} = v_c,  \label{def_vc}
\end{equation}
with some velocity $v_c$ given by 
\begin{equation}
v_c = \min_\gamma v(\gamma)=v(\gamma_c) \qquad\text{with }
v(\gamma)=\frac1\gamma \log \mathbb{E}\bigg[\sum_{i=1}^n e^{\gamma
\epsilon_i}\bigg],  \label{vg}
\end{equation}
as soon as this minimum exists for some $\gamma_c>0$.

Here the expectation is both on the displacements $\epsilon_i$ and the
number $n$ of children. $\gamma_c$ is the value of $\gamma$ for which the
minimum is reached.

For instance, with the rules of figure~\ref{outcomes}, one checks that 
\begin{equation*}
v(\gamma)=\frac1\gamma\log \big[1+\mu+a(e^\gamma-1)\big].
\end{equation*}
Indeed, 
\begin{align*}
\mathbb{E}\Big[\sum_{i=1}^n e^{\gamma \epsilon_i}\Big]& = (1-a)(1-\mu)\times
e^0+(1-a)\mu\times(e^0+e^0) \\[-2ex]
&\quad\qquad+a(1-\mu)\times
e^\gamma+a\mu\times(e^0+e^\gamma)=1+\mu+a(e^\gamma-1) \ .
\end{align*}

We can now define the $N$-BRW. The evolution for one time step of a $N$-BRW
goes like a BRW, except that after each step only the $N$ highest particles
are kept, the other being removed, so that after some time there are exactly 
$N$ particles in the system at each time step. Note that this rule is not
the same as keeping the $N$ highest of a BRW at each time step; see the
right part of figure~\ref{BRW}.

The $N$-BRW and related models (the $N$-BBM, the stochastic Fisher equation)
have been studied in mathematics, theoretical physics and biology and
several results are known both from non-rigorous and rigorous arguments.

For the $N$-BRW, a striking result is that one can still define a velocity $v_N$ for the highest particle, as in \eqref{def_vc}. This velocity depends
on $N$, converges to $v_c$ as $N\to\infty$, but the speed of convergence is
unexpectedly slow (this is explained in Brunet and Derrida (1997) with a
rigorous proof provided by Berard and Gou\'er\'e (2010) for the case $\mu =
1 $).

\begin{Thm}[Velocity. Berard and Gou\'er\'e (2010)]
For the $N$-BRW with $\mu=1$, we have: 
\begin{equation}
v_N = v_c -\frac{\pi^2 v^{\prime \prime }(\gamma_c)}{2L_0^2} + o\left(\frac{1
}{L_{0}^{2}}\right)  \label{vN}
\end{equation}
with 
\begin{equation}
L_0=\frac1{\gamma_c}\log N,  \label{LN}
\end{equation}
$v_c$, $v(\gamma)$ and $\gamma_c$ defined as in \eqref{vg} and $o({1}/{
L_{0}^{2}})$ a term that is vanishing faster than $1/L_0^2$ as $N \to \infty$
.
\end{Thm}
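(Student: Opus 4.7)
The plan is to couple the $N$-BRW to a \emph{barrier branching random walk} (BBRW) --- a BRW in which particles are killed the instant they fall below a linearly moving absorbing barrier --- and to exploit the sharp asymptotics of the survival probability of a BBRW whose barrier slope is close to the critical velocity $v_c$. A natural length scale emerges from the selection rule itself: in the co-moving frame where the leading particle is stationary, a free BRW would produce on the order of $e^{\gamma_c x}$ particles within distance $x$ behind the front, so the selection down to $N$ particles cuts off everything more than $L_0 = \gamma_c^{-1}\log N$ below the tip. The key analytic input is the classical statement that a one-particle BBRW whose absorbing barrier moves with velocity $v_c - \varepsilon$, starting with a particle at distance $L$ above the barrier, survives with positive probability if and only if $\varepsilon \ge \pi^2 v''(\gamma_c)/(2L^2) + o(1/L^2)$; the constant $\pi^2/2$ comes from the principal Dirichlet eigenvalue of the Laplacian on $[0,L]$, appearing through the linearization of the BRW at the critical tilt $\gamma_c$.

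Given this input, the two directions of \eqref{vN} are handled separately. For the upper bound $v_N \le v_c - \pi^2 v''(\gamma_c)/(2L_0^2) + o(1/L_0^2)$, I would place, inside the $N$-BRW at a typical time, a fictitious absorbing barrier at distance $L_0$ below the leading particle, moving at a velocity slightly above the putative $v_N$. If $v_N$ were too large, the BBRW descending from each of the $N$ particles present would go extinct with probability bounded below independently of $N$. A union bound, combined with the easy observation that particles below the barrier cannot climb back up the exponential height $\gamma_c L_0$ to the tip on short time scales, then forces the whole $N$-BRW population to collapse below the barrier within a few Brunet--Derrida relaxation times of order $L_0^3$, contradicting the conservation of $N$ particles at every step.

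For the lower bound $v_N \ge v_c - \pi^2 v''(\gamma_c)/(2L_0^2) + o(1/L_0^2)$, the strategy is instead \emph{constructive}: I would exhibit a surviving sub-BBRW inside the $N$-BRW whose barrier advances strictly slower than the claimed velocity. A second-moment argument on the number of descendants surviving above this slower barrier (using the many-to-two formula and the sharp eigenvalue asymptotics above) shows that with positive probability this subpopulation grows to exceed $N$ on time scales of order $L_0^3$, which is enough to refill the system before selection damages the sub-BBRW; iterating this step gives a genuine speed for the leading particle at least equal to the barrier velocity.

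The main obstacle, and where the bulk of the B\'erard--Gou\'er\'e argument actually lies, is matching the constant $\pi^2 v''(\gamma_c)/2$ in both bounds. The weaker statement that $v_c - v_N$ is of order $(\log N)^{-2}$ needs only soft large-deviation and first-moment estimates, but identifying the correct numerical prefactor demands a precise expansion of the principal Dirichlet eigenvalue on an interval of length $L_0$ for the operator obtained by linearizing the tilted generating function at $\gamma = \gamma_c$, together with careful control of the nonlinear corrections from branching. The specialization $\mu = 1$ assumed here is convenient: every particle has exactly two offspring (one at its own position, the other shifted by a Bernoulli$(a)$ amount), so the moment generating function $2 + a(e^\gamma -1)$ is analytic on all of $\mathbb{R}$, and $v(\gamma)$ attains a unique finite minimum $v_c$ at some $\gamma_c \in (0,\infty)$ with $v''(\gamma_c) > 0$.
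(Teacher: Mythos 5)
The paper does not actually prove this theorem; it is imported from B\'erard and Gou\'er\'e (2010), with the heuristic traced back to Brunet and Derrida (1997). So there is no ``paper's own proof'' to compare against, and your sketch should be judged against what B\'erard--Gou\'er\'e actually do. On that score your outline is faithful in its broad strokes: the central tool is indeed comparison with a BRW killed below a linearly moving barrier, the sharp survival-probability asymptotics (essentially the Gantert--Hu--Shi estimate the paper cites in the preceding footnote, refined by B\'erard--Gou\'er\'e) supply the $\pi^2 v''(\gamma_c)/(2L^2)$ threshold, the length scale $L_0=\gamma_c^{-1}\log N$ is where the $N$-particle cutoff meets the exponential profile $e^{-\gamma_c x}$, and the $\pi^2/2$ factor does trace to the principal Dirichlet eigenvalue on an interval after linearizing at the critical tilt $\gamma_c$. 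Your remark that $\mu=1$ gives exactly binary branching with generating function $2+a(e^\gamma-1)$ is also correct and is indeed why B\'erard--Gou\'er\'e treat this case.

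That said, your proposal is a plan, not a proof, and the places you describe as ``the bulk of the argument'' are exactly the places where the sketch is thinnest. Two specific gaps worth flagging. First, the $N$-BRW's selection rule (keep the top $N$) is not a moving absorbing barrier: the effective killing level is the random $N$-th highest position, which fluctuates and is determined by the very process you are trying to control. Turning that into a clean two-sided comparison with fixed-slope barriers --- without the coupling leaking an error larger than $o(L_0^{-2})$ --- is where most of B\'erard--Gou\'er\'e's technical work sits, and your sketch treats it as if it were immediate. Second, in the lower-bound step, a second-moment (many-to-two) computation gives survival with probability bounded away from zero for a \emph{single} killed subtree, but you still need the subpopulation to actually dominate the $N$-BRW selection dynamics before the selection kills it; ``iterating this step'' is the right idea but needs a concrete restart/regeneration structure and a quantitative estimate that the refilled front has not drifted too far. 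Finally, a small imprecision: the $L_0^3$ timescale you invoke in the upper bound is the Brunet--Derrida timescale for \emph{fluctuations} of the front position, not the timescale on which a too-fast barrier starves the population; the extinction argument works on a shorter, $O(L_0^2)$, relaxation scale, and conflating the two would weaken the contradiction you are aiming for.
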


(Nota: even though a proof is available only in one case, heuristic
arguments and numerical simulations suggest that \eqref{vN} holds in a large
number of cases.)

\subsubsection*{Size of support}

Based on numerical observations and phenomenological theory for closely
related models, it is believed (see Brunet and Derrida (1997) and Brunet,
Derrida, Mueller and Munier (2006)) that after a long time, the system
reaches a stationary regime as seen from the center of mass of the system.
Here, stationary is to be interpreted in a probabilistic sense: while for
finite $N$, there are still fluctuations, the laws determining the system
become stationary. In this stationary regime, the size of support, which is
the difference between the position $y_\text{max}$ of the highest particle
and the position $y_\text{min}$ of the lowest particle, satisfies 
\begin{equation}
L:=y_\text{max}-y_\text{min} = L_0 + \mathcal{O}(1),  \label{sizeofsupport}
\end{equation}
with $L_0$ as in \eqref{LN} and $\mathcal{O}(1)$ is designating a random
variable whose law becomes independent of $N$ in the large $N$ limit.
(Therefore, it will be smaller and smaller as compared to $L_{0}$ when $N\to
\infty$.).

By construction, a finite number of firms assures a productivity
distribution that has a finite support at any fixed time, but what 
\eqref{sizeofsupport} means is that the firms have at all time comparable
productivity levels, and the scenario where some firms stay put while others
diverge at infinity due to innovations cannot occur. However, because the
process is stochastic, there is a probability of a firm with an extended
streak of successful innovations breaking out for a while, so that the
support of productivity distribution may occasionally get large, but after
some time laggard firms will catch-up via imitation and close the gap.

\subsubsection*{Shape of the front}

Another interesting result concerns the typical density of the cloud of
particles in the stationary regime. To simplify the discussion, we assume
that the underlying BRW is the one described in figure~\ref{outcomes}. Then,
the population lives on the lattice, and we introduce $f(y,t)$ the fraction
of particles (or firms) at position (or quality level) $y$ at time $t$.

After the reproduction phase (but before the culling phase, see 
\eqref{2phases}), the \emph{expected} fraction of firms at position $y$ and
time $t+1$ is $(1-a + \mu) f(y,t) + a\, f(y-1,t) $. Then, one could write
the evolution equation as 
\begin{equation}
f(y,t+1)=(1-a + \mu) f(y,t) + a\, f(y-1,t) + (\text{noise}) \quad\text{if $
y>y_\text{min}(t+1)$} \ ,  \label{noisy}
\end{equation}
where $y_\text{min}(t)$ is the position of the lowest firm at time $t$ (the
values of $y_\text{min}(t+1)$ and of $f\big( y_\text{min}(t+1),t+1\big)$ are
obtained by writing $\sum_y f(y,t+1)=1$). The noise term is some random
number with zero expectation and standard deviation of order $\mathcal{O}(
\sqrt{{f}/{N}})$, depending on the density\footnote{The value for $N[f(y,t+1)-f(y,t)]$ before the culling phase is (the number
of new firms innovating from $y-1$) minus (the number of firms innovating to 
$y+1$) plus (the number of imitators). These three terms are independent
Binomial random variables and so one finds that the exact expression for the
standard deviation of the noise term in \eqref{noisy} is $\sqrt{
[f(y-1,t)a(1-a)+f(y,t)a(1-a)+f(y,t)\mu(1-\mu)]/N}$.}.

As $N\to\infty$, in the so-called hydrodynamic limit, the noise term in 
\eqref{noisy} is expected to disappear. While there are no rigorous result
concerning this hydrodynamic limit for the $N$-BRW, such a result exists for
two closely related models, see Durrett and Remenik (2011) and De Masi,
Ferrari, Presutti and Soprano-Loto (2019). In the first model, time is
continuous, and at rate 1 each particle creates an additional particle at a
random distance $\epsilon$; when this occurs, the lowest particle is removed
to keep the population constant. The second model is the $N$-BBM, which can
be described as follows: time and space are continuous. $N$ particles
perform independent Brownian motions. At rate 1, each particle creates an
offspring at its own position ($\epsilon=0$); when this occurs, the lowest
particle is removed to keep the population constant.

The equation obtained in this large $N$ limit, as given by \eqref{noisy}
without the noise term, is reminiscent of the model described in section~\ref{density}. The only remaining difference is that in section~\ref{density},
the imitation rate was tuned at each time step in such a way that $y_\text{min}(t)$ would increase by exactly one unit at each time step. In 
\eqref{noisy} (with or without the noise term), the imitation rate $\mu$ is
fixed exogenously and, depending on its value, the lower bound $y_\text{min}
(t)$ can increase by several units in a time step or take several time steps
to increase by one unit.

The evolution equation is maybe easier to write on $h(y,t)=\sum_{z\ge y}
f(z,t)$, which represents the fraction of firms with a quality level at
least $y$. One checks that 
\begin{equation}
h(y,t+1)=\min\Big[1,(1-a + \mu) h(y,t) + a\, h(y-1,t) + (\text{noise})\Big]
\label{noisy2}
\end{equation}
where, here again, the noise term disappears in the large $N$ limit. Without
the noise term, \eqref{noisy2} is the discrete-time version of the equation
studied in \cite{Bru} which was shown to display most of the characteristics
of the Fisher-KPP equation. With the noise term, it is very similar to the
equations studied in Brunet and Derrida (1997), Brunet and Derrida (2001)
and Brunet, Derrida, Mueller and Munier (2006) papers, as well as Mueller,
Mytnik and Quastel (2011), which is with continuous time and space.

As suggested by Brunet and Derrida (1997), the velocity \eqref{vN} of the
noisy front (and thus of the $N$-BRW) could be obtained to the $1/L_0^2$
order by replacing the noise term in \eqref{noisy2} by a cutoff of order
$1/N $, meaning that after each time step the value of $h$ is set to 0 at all
the positions $y$ where the evolution equation leads to a result smaller
than $1/N$. Furthermore, the shape of the front at large times is for large
$N$ (and hence large $L_0=(\log N)/\gamma_c$), large $z$ and large $L_0-z$
(so that $z$ is not too close to 0 or to $L_0$) approximately given by 
\begin{equation}
h(y_\text{min}(t)+z,t)\approx A L_0 \sin\frac{\pi z}{L_0}\, e^{-\gamma_c z}
\ .  \label{cutoff}
\end{equation}
Notice then that, to leading order, the density $f(y,t)=h(y,t)-h(y+1,t)$ is
given by the same equation with the prefactor $A$ replaced by 
$A(1-e^{-\gamma_c})$ \footnote{An interesting question, which we postpone to another paper, is whether the
sin prefactor can be observed in real data.}.

The shape of the front \eqref{cutoff} is for the front equation~\eqref{noisy}
with the noise replaced by a cutoff. For the $N$-BRW model itself as
described by equation~\eqref{noisy} with its noise term, Brunet, Derrida,
Munier and Muller (2006) give the following non-rigorous phenomenological
description of the model. This description is supported by numerical
simulation and, to some extent, by rigorous work (Maillard (2016)).

In the $N$-BRW, the shape of the front is most of the time given by the
cutoff shape~\eqref{cutoff} plus some small fluctuations. Occasionally,
typically every $\propto L_0^3$ units of time, a huge fluctuation occurs
where the shape of the front is significantly different from \eqref{cutoff}
for about $\propto L_0^2$ units of time. Such a huge fluctuation comes in
the following way: a single particle moves up further than typical (a single
firm innovates a lot in a short time). That particle branches as usual (the
firm is imitated), but its `imitation offspring', \textit{i.e.\@}
its imitators, the imitators of its imitators, etc., are at first rarely
removed from the system because they typically lie above the others (they
have better quality than the other firms). The end effect of such a
fluctuation is that a positive fraction of all the firms are replaced by the
imitation offspring of the highly successful firm that started the
fluctuation. So, to reformulate, based on numerical computations for similar
models, in the stationary regime, a density of firms like \eqref{cutoff} is
expected, while occasionally (every $\propto L_0^3$ units of time), a single
firm innovates a lot and gets imitated by so many firms that it redefines
the industry (in the sense that the innovation is shared by a positive
fraction of the agents). The transition time to redefine the industry is of
order $\propto L_0^2$.

This is particularly interesting: at random times, a firm is so successful
that a full fraction of the industry ends up imitating it (or its imitators).

A word of caution: the results presented above are asymptotic results, which
are believed to be valid for large values of $N$. It is not obvious that
$N=10^4$ or $N=10^5$ are big enough for these results to be very accurate.

\subsection{The \texorpdfstring{$L$}{L}-BRW model}

\label{LBRW}

A variant of the $N$-BRW is the $L$-BRW which might be more adapted to
describe a situation where lagging firms go out of business and new firms
enter the market. The evolution phase of the $L$-BRW (innovation and
imitation) is the same as for the BRW or the $N$-BRW (particles innovate and
are imitated), but the culling phase is different; in the $L$-BRW, at each
time step, firms with a productivity lagging more than $L$ level below the
leading firm are removed from the system, as it is assumed that they are not
productive enough to survive the market. Here, the parameter $L$ is given
exogenously.

In the $L$-BRW, the number $N$ of firms fluctuates. However, for large $L$,
one observes that the number $N$ of firms fluctuates around some average
value $N_0$ with 
\begin{equation}
N_0= e^{\gamma_c L}  \label{NL}
\end{equation}
which is formally the same relation as \eqref{LN}.

The heuristic argument of Brunet, Derrida, Mueller, and Munier (2006) is
that a $N$-BRW and $L$-BRW have very similar typical behaviors: in the
$N$-BRW, $N$ is a given parameter and $L$ (defined as the observed support size
or distance between the best and worst firm) fluctuates, while in the
$L$-BRW, it is the support size $L$ which is given, and the population size $N$
fluctuates. In either case, one has the relation 
\begin{equation*}
L \approx \frac1{\gamma_c}\log N.
\end{equation*}

Then, one expects that the velocity $v_L$ of the $L$-BRW is given by 
\begin{equation}  \label{vL}
v_L\approx v_c-\frac{\pi^2v^{\prime \prime }(\gamma_c)}{2L^2}
\end{equation}
(compare to \eqref{vN}), that the average shape of the front is given by the
sine shape \eqref{cutoff} of the cutoff theory, etc.

There is unfortunately no rigorous paper establishing these results for the
$L$-BRW. However, Pain (2016) has established result~\eqref{vL} in the case
of the $L$-BBM (where BBM stands for Branching Brownian Motion) which is a
continuous version of the $L$-BRW. More precisely, in the $L$-BBM, particles
perform Brownian motions. With rate~1, a particle is replaced by two
particles, and any particle at a distance more than $L$ from the highest
particle is removed. The fact that \eqref{vL} holds for the $L$-BBM and the
close similarity between the $L$-BBM and the $L$-BRW is a strong indication
that the heuristic arguments are correct.

\section{Conclusion}

\label{sec:conclusion}

We model the dynamics of technology diffusion to characterize shapes of the
stationary firm productivity distributions with a skew, and explore
conditions that will lead to compact productivity supports. Innovation and
imitation activities move the productivity distribution forward, and compact
supports can be sustained as competition causes the low productivity firms
to exit. Section~\ref{fixedsupport} provides a model generating skewed
productivity distributions with compact support. It relies on an
endogenously determined finite productivity ladder, sustained by a fraction
of firms that can leapfrog to the frontier. Section~\ref{sec:BRW} introduces
models with either a finite number of $N$ firms, or a finite productivity
support $L$. In both cases the support of the productivity distribution
remains compact; in the former case the length of the support is stochastic
while the number of firms are constant, and in the latter the support length
is fixed while the number of firms fluctuates.

\end{document}